\documentclass[conf,twocolumn]{IEEEtran}
\usepackage{graphicx,cite,calc,dsfont}
\usepackage[defs]{ams}
\usepackage{eqnarray,subeqn,xcolor}
\usepackage{hyperref}
\usepackage{amsthm,amssymb}
\usepackage{algorithm}
\usepackage{algpseudocode}
\newtheorem{thm}{Theorem}
\newtheorem{lem}{Lemma}

\theoremstyle{definition}

\newtheorem{myremark}{Remark}

\begin{document}

\title{Caching Gain in Wireless Networks with Fading:\\ A Multi-User Diversity Perspective}

\author{ Seyed Pooya Shariatpanahi$^1$ , Hamed Shah-Mansouri, Babak Hossein Khalaj$^2$ \\
1: School of Computer Science, Institute for Research in Fundamental Sciences (IPM), Tehran, Iran. \\
2: Department of Electrical Engineering and Advanced Communication Research Institute (ACRI)\\
Sharif University of Technology, Tehran, Iran.\\
(emails: pooya@ipm.ir, hshahmansour@alum.sharif.edu, khalaj@sharif.edu)\\ }

\maketitle
\thispagestyle{empty}
\pagestyle{empty}

\section*{Abstract}
\emph{We consider the effect of caching in wireless networks where fading is the dominant channel effect.
First, we propose a one-hop transmission strategy for cache-enabled wireless networks, which is based on exploiting multi-user diversity gain.
Then, we derive a closed-form result for throughput scaling of the proposed scheme in large networks, which reveals the inherent trade-off between cache memory size and network throughput.
Our results show that substantial throughput improvements are achievable in networks with sources equipped with large cache size.
We also verify our analytical result through simulations.
\let\thefootnote\relax\footnotetext{This research was in part supported by a grant from IPM.}}\\ \\
\emph{\textbf{Index Terms}}--- Caching, fading, order statistics, multi-user diversity, throughput scaling, wireless networks.

\section{Introduction}\label{Sec_Introduction}

Almost all wireless communication networks face two different conditions of operation, namely, off-peak hours and peak hours.
During the off-peak hours, the users do not request much traffic, and accordingly, the network is not crowded.
In contrast, during peak hours, the network is crowded and network capacity limitation becomes of high importance.
A promising idea to overcome the difficulties encountered in network operation during peak hours is to cache data at end-user devices memory during off-peak hours, and exploit these cached data at peak hours to relieve network congestion (\cite{Maddah_2012}, \cite{Maddah_2013}, \cite{Niesen_2013}, and \cite{Azimdoost_2012}).

While this idea looks very attractive, in order to devise efficient network operation protocols in this context, there is the need for understanding the costs and benefits this idea brings to the network.
In this regard, recently, a number of papers have analyzed applying this idea to overcome the capacity limitation in wireless networks.
In \cite{Azimdoost_2012}, throughput of cache-enabled large networks for grid and random node placement is investigated, where the cached data has a limited life-time.
In \cite{Maddah_2012}, \cite{Maddah_2013}, and \cite{Niesen_2013} a new information-theoretic framework for the caching problem is developed.
Also, in \cite{Gitzenis_2013}, the performance of multi-hopping on a square grid, and the effect of caching, on network performance, is analyzed.
\cite{Golrezai_2012} and \cite{Ji_Basic_Priciples} focus on device-to-device caching networks where, for example, in \cite{Ji_Basic_Priciples}, the effect of coded multi-casting gain versus spatial reuse gain is investigated.
Finally, \cite{Ji_One_Hop} considers a single-hop caching network, and considering an interference avoidance strategy, it investigates the optimal throughput-outage trade-off.

In this paper, we propose another perspective on analyzing caching gain in wireless networks, namely, in the context of multi-user diversity.
Consider a wireless network with $n$ sources and $n$ destinations, communicating in a shared wireless medium.
Suppose there exists a pool of data files cached at transmitter nodes, and each receiver is interested in accessing one of such data files.
Also, suppose that the wireless channel power gain between sources and destinations are modelled by i.i.d. random variables -- the so-called \emph{random connections model} (\cite{Gowikar}, \cite{Cui}, \cite{Pooya_WCL}, \cite{Pooya_2013}, and \cite{Cui_Opportunistic}).
In such model, random fading of wireless channel is the dominant effect, in contrast to the models where the path loss effect is dominant.
Thus, we face a critically interference-limited network, and our objective is to deliver as many files as we can to the destinations.

In previous works, two different viewpoints on caching gain are considered.
In \cite{Maddah_2012}, \cite{Maddah_2013}, and \cite{Niesen_2013} the caching gain is achieved by exploiting the coded multi-casting gain, while in \cite{Azimdoost_2012}, \cite{Gitzenis_2013}, \cite{Golrezai_2012}, \cite{Ji_Basic_Priciples}, and \cite{Ji_One_Hop}, caching gain is exploited via the spatial reuse gain.
In contrast, due to fading dominance in our network channel model, we encounter a critically interference-limited network.
Accordingly, we propose to analyze the caching effect in the context of multi-user diversity gain.
This gain is the result of an opportunistic transmission strategy, benefiting from random fluctuations of the wireless channel.

We consider one-hop communications in a cached-enabled wireless network under the random connections model.
In our model, the sources follow an on-off transmission paradigm, and the receivers use single-user decoding techniques.
Our proposed transmission strategy (presented in the form of an algorithm) is to find out the maximum number of possible transmissions with best channel conditions, in order to maximize the number of successful receptions.
We derive a closed-form scaling expression for the average throughput of a network with heavy-tailed channel power distribution.
With the help of such expression, we explain the trade-off between memory and throughput in a cache-enabled wireless network, where fading is the dominant channel effect. Also, we provide simulations verifying our analytical results.

The rest of the paper is organized as follows.
In Section \ref{Sec_Model}, we describe the network model.
In Section \ref{Sec_Strategy}, we present the main algorithm according to which the network transmission strategy is determined.
We analyze network performance in terms of throughput, in Section \ref{Sec_Throughput}.
In Section \ref{Sec_Simulations}, we present simulations and finally,  Section \ref{Summary} concludes the paper.

\section{Network Model}\label{Sec_Model}
Consider a wireless network consisting of $n$ source (provider) and $n$ destination (client) nodes. We denote the source nodes by $S_1, \dots, S_n$ and the destination nodes by $D_1, \dots, D_n$. Consider a pool of $n$ data files named $F_1, \dots, F_n$. We assume that the destination $D_i$ is interested in accessing the data file $F_i$ from that pool, via one-hop transmission from the sources. Each source $S_i$ has a library of $m$ files from the pool, cached in its memory. Thus, each source has the potential to serve one of $m$ destinations who request the file available in the source's cache. We call the set of destinations that can be served by the source $S_i$ as $\Delta_i$ (see Fig. \ref{Fig1} for an example of the case $n=4$ and $m=2$). In this paper, we assume that the source caches are filled with the files from the pool randomly in a uniform way, without considering destination requests.

The wireless channel power between source $S_i$ and destination $D_j$ is characterized by the random variable $\gamma_{i,j}$. Therefore, the network channel state is completely characterized by the $n \times n$ channel power matrix $\mathbf{\Gamma}$ whose elements are $\gamma_{i,j}$. Also, we assume that the elements of $\mathbf{\Gamma}$ are independently and identically distributed (i.i.d.) from the distribution $f(\gamma)$ (similar to \cite{Gowikar}, \cite{Cui}, \cite{Pooya_WCL}, \cite{Pooya_2013}, and \cite{Cui_Opportunistic}), and follow a quasi-static rule. In other words, the elements of $\mathbf{\Gamma}$ remain fixed during each time slot, and change in the subsequent time slots, independent from other time slots.


The sources are assumed to follow an on-off transmission paradigm. At each time slot, a subset of sources will be activated (with unit power), and each of them will send one of the data files from its library, while other sources remain silent in that time slot. We denote the set of activated sources by $\mathbb{S}$. The set of sources which are active at each time slot (i.e. $\mathbb{S}$) and the target destination of each source constitute the \emph{transmission strategy} of that time slot. \emph{Transmission strategy} at each time slot should be determined based on the Channel State Information (CSI) encompassed in channel power gain matrix. In this paper, we assume that a centralized supervisor, which has access to full CSI, determines the \emph{transmission strategy}. Making this decision in a distributed manner is out of the scope of this paper.

We assume one-hop transmission conducted during a single time slot, and assume that destinations use only single-user decoding techniques. Thus, in order to have a successful transmission from source $S_i$ to the destination $D_j$, the receive Signal to Interference and Noise Ratio ($SINR$) at $D_j$ should be above a constant threshold level $\beta$. Consequently, the successful transmission condition from $S_i$ to $D_j$ can be written as:
\begin{equation}\label{Eq_Model_SINR_Constraint}
SINR_{i,j} \triangleq \frac{\gamma_{i,j}}{N_0+\sum_{S_k \in \mathbb{S}, k \neq i}{\gamma_{k,j}}} \geq \beta.
\end{equation}
\begin{figure}
\begin{center}
\includegraphics[width=0.45\textwidth]{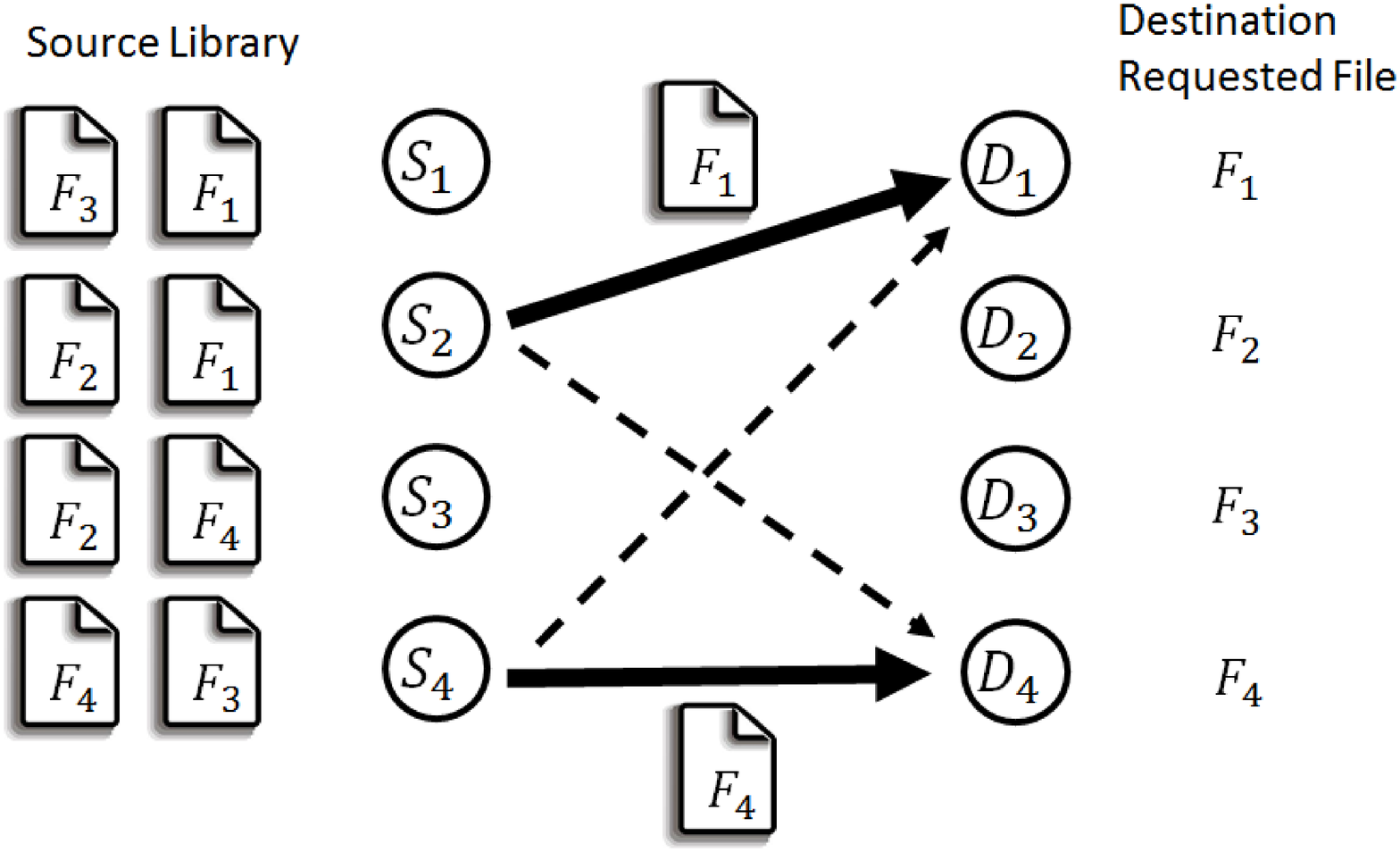}
\end{center}
\caption{Each source has a library of files. Each destination requests one file. In this simple example, we have $n=4$, $m=2$, $\Delta_1=\{D_1, D_3\}$, $\Delta_2=\{D_1, D_2\}$, $\Delta_3=\{D_2, D_4\}$, and $\Delta_4=\{D_3, D_4\}$. The activation strategy is set to $\mathbb{S}=\{S_2, S_4\}$. Solid arrows show direct links, while, dashed arrows show interference links. \label{Fig1}}
\end{figure}
At each time slot, network throughput $T$ is defined as the number of successful transmissions from the sources to the destinations. Since channel power gains are modelled by random variables, network throughput $T$ is also a random variable.

In this paper, we assume that the channel power has the following p.d.f. and c.d.f., respectively (similar to \cite{Gowikar} \cite{Cui}, and \cite{Pooya_WCL})
\begin{eqnarray}\label{Eq_Model_Pareto_PDF_CDF}
f(x) &=& \left\{ \begin{array}{ll}
         0 & \mbox{, $x < 1$},\\
        \alpha/x^{\alpha+1} & \mbox{, $x \geq 1$}.\end{array} \right. \\ \nonumber
F(x) &=& \left\{ \begin{array}{ll}
         0 & \mbox{, $x < 1$},\\
        1-1/x^\alpha & \mbox{, $x \geq 1$}.\end{array} \right. \\ \nonumber
\end{eqnarray}
where $\alpha \geq 2$ exhibits heavy-tailed behaviour. In this paper, we have considered the distribution in (\ref{Eq_Model_Pareto_PDF_CDF}) since it clearly illustrates the multi-user diversity gain in caching networks, however, the approach used in this paper can be extended to other distributions in a straightforward manner.

\section{The Proposed Transmission Strategy}\label{Sec_Strategy}

In this section, we first illustrate the network operation with a simple example consisting of $n=4$ sources and destinations, then we will present the general
formulation. First, we explain the concept of \emph{transmission
strategy} in our network setup. Consider the network discussed in Fig. \ref{Fig1}. Each source has $m$ potential destinations, and should first choose one of them at each transmission shot. We define the \emph{best destination vector} $\mathbb{D}$ as the vector representing chosen destinations for different sources.
Suppose we choose the following \emph{best destination vector} $\mathbb{D} = \{D_3, D_1, D_2, D_4\}$.
Such vector indicates that the source $S_1$ decides to transmit to the destination $D_3$, the source $S_2$ to the destination $D_1$, the source $S_3$ to the destination $D_2$, and the source $S_4$ to the destination $D_4$ (obviously we need the destination of $S_i$ to be a member of $\Delta_i$).
For notational convenience, we present the \emph{best destination vector} by $\mathbb{D} = \{D_1^*, D_2^*, D_3^*, D_4^*\}$. So, we have $D_1^*=D_3$, $D_2^*=D_1$, $D_3^*=D_2$, and $D_4^*=D_4$.

Ideally, we want to activate all sources to transmit the messages requested by their desired destinations, as specified in $\mathbb{D}$. However, due to interference, we can only activate a subset of sources,
to ensure successful reception at corresponding destinations. Thus, we should specify which sources should become active by determining the subset of sources belonging to the activated set $\mathbb{S}$. Such assignment determines our \emph{activation strategy}. In the example in
Fig. \ref{Fig1}, we have set the \emph{activation strategy} as $\mathbb{S}=\{S_2,S_4\}$.

The pair $(\mathbb{D},\mathbb{S})$ is defined as the \emph{transmission strategy} of the network. This \emph{transmission strategy} should
be determined based on the network channel state matrix $\mathbf{\Gamma}$, and our objective is to have the largest number of successful transmissions. Once the \emph{transmission strategy} is clear, the sources
send the specified files via a single hop to their intended destinations (see Fig. \ref{Fig1}). In the case of a network with $n$ source-destination pairs, $\mathbb{D}$
is a $1\times n$ vector, and $\mathbb{S}$ can be any subset of $\{S_1,\dots,S_n\}$. Also, throughput will be the number of successful transmissions from each source to its destination, and is a function of the activation strategy $(\mathbb{D},\mathbb{S})$, along with $\mathbf{\Gamma}$.
\begin{algorithm}\label{Alg_Main}
\caption{Determining Transmission Strategy}
\begin{algorithmic}[0]
\\
\\
\State $\mathrm{\mathbf{input}} \gets \mathbf{\Gamma}, \beta, N_0, \{\Delta_i\}_{i=1,\dots,n}$
\\
\\
- - - - - - - - - - - - - - - - - - - - - - - - - - - - - - - - - -
\newline
\Comment{\emph{Determining the best destination vector for each source, and forming $n$ source-destination pairs, namely, $S_i-D_i^*, i=1,\dots,n$.}}
\\- - - - - - - - - - - - - - - - - - - - - - - - - - - - - - - - - -
\State $\mathbb{D} \gets \{\o\} $
\ForAll{$S_i, i=1,\dots,n$}
\State $k^*=\arg{\max_{D_j \in \Delta_i}{\gamma_{i,j}}}$
\State $\gamma_i^*=\gamma_{i,k^*}$
\State $D_i^*=D_{k^*}$
\State $\mathbb{D} \gets \mathbb{D} \cup \{D_i^*\} $
\EndFor
\\
\\
- - - - - - - - - - - - - - - - - - - - - - - - - - - - - - - - - -
\newline
\Comment{\emph{Sorting source-destination pairs (i.e. $S_i-D_i^*$) according to the power of direct link (i.e. $\gamma_i^*$), to arrive at sorted pairs (i.e. $S_{(i)}-D_{(i)}^*$).}}
\\- - - - - - - - - - - - - - - - - - - - - - - - - - - - - - - - - -
\State $\{\gamma_{(1)}^*,\dots,\gamma_{(n)}^*\} \gets \mathrm{sort} \left( \{\gamma_{1}^*,\dots,\gamma_{n}^*\} \right)$
\\
\\
- - - - - - - - - - - - - - - - - - - - - - - - - - - - - - - - - -
\newline
\Comment{\emph{Forming $n$ specific source activation strategies -- i.e. $\mathbb{S}_i, i=1,\dots,n$ -- as the candidate activation sets. In $\mathbb{S}_i$, the first $i$ most powerful pairs are activated. Finally, the one resulting in the largest throughput is selected as the algorithm activation strategy output (i.e. $\mathbb{S}$).} }
\\- - - - - - - - - - - - - - - - - - - - - - - - - - - - - - - - - -

\For{$i=1,\dots,n$}
\State $\mathbb{S}_i \gets \{S_{(n-i+1)},\dots,S_{(n)}\} $
\EndFor
\State $\mathbb{S}_t=\arg\max_{\mathbb{S}_i}{T\left(\mathbb{D},\mathbb{S}_i \right)}$
\State $\mathbb{S} \gets \mathbb{S}_t$
\\
\\
- - - - - - - - - - - - - - - - - - - - - - - - - - - - - - - - - -
\\
\State $\mathrm{\mathbf{output}} \gets \left(\mathbb{D}, \mathbb{S}\right) $
\\
\end{algorithmic}
\end{algorithm}

Next, we discuss the pseudo-code of the algorithm according to which the \emph{transmission strategy} is set, and is presented in Algorithm 1. The input to Algorithm 1 consists of the network channel state matrix $\mathbf{\Gamma}$, $SINR$ threshold $\beta$, AWGN power $N_0$, and the set of potential destinations of each source (according to the data available in source's cache), i.e. $\Delta_i$'s. According to Algorithm 1, first we determine the \emph{best destination vector} $\mathbb{D}=\{D_1^*,\dots,D_n^*\}$. In order to determine $D_i^*$, the source $S_i$ compares the power of its direct links
towards all its potential destinations (i.e. $\gamma_{i,j}$ where $D_j \in \Delta_i$) and selects the destination with the largest direct link power as its corresponding destination. This step assigns destination $D_i^*$ to source $S_i$, which when completed for all sources, forms $n$ source-destination pairs $S_i-D_i^*, i=1,\dots,n$. The direct link power of the pair $S_i-D_i^*$ is called $\gamma_i^*$.

The next step is sorting the power of these direct links (i.e. $\gamma_i^*, i=1,\dots,n $) to derive their order statistics
\begin{equation}\label{Eq_Proposed_Sorted_Powers}
	\gamma_{(1)}^* \leq \gamma_{(2)}^* \leq  \dots \leq \gamma_{(n)}^*.
\end{equation}
Based on the sorted version of direct links (i.e. $\gamma_{(i)}^*, i=1,\dots,n $) we can also sort the source-destination pairs. In other words, we can define
the following ordering between the source-destination pairs:
\begin{equation}\label{Eq_Proposed_Sorted_Pairs}
	S_{(1)}-D_{(1)}^* \leq S_{(2)}-D_{(2)}^* \leq  \dots \leq S_{(n)}-D_{(n)}^*,
\end{equation}
where the power of direct link of the pair $S_{(i)}-D_{(i)}^*$ is $\gamma_{(i)}^*$. In this algorithm, we propose activating the pairs with the most powerful direct links. Thus, we form all the \emph{source activation strategies}
in which the first $i$ most powerful source-destination pairs are activated, where $i$ can range from $1$ to $n$. In other words, we form
\begin{equation}\label{Eq_Proposed_Sample_Activation_Set}
    \mathbb{S}_i = \{S_{(n-i+1)},\dots,S_{(n)}\}, \hspace{10 mm} i=1,\dots,n.
\end{equation}
Subsequently, we select the one which results in the highest throughput
\begin{equation}\label{Eq_Proposed_S_t}
\mathbb{S}_t=\arg\max_{\mathbb{S}_i}{T\left(\mathbb{D},\mathbb{S}_i \right)},
\end{equation}
where $t$ strongest source-destination pairs are activated. Thus, the \emph{activation strategy} will be in the following form
\begin{equation}\label{Eq_Proposed_Optimum_Activation_Set}
    \mathbb{S}=\{S_{(n-t+1)}, \dots, S_{(n)} \}.
\end{equation}
Ideally, we are interested in the case where $t=n$, which is not feasible due to the interference phenomenon. Finally, the \emph{transmission strategy} $\left(\mathbb{D}, \mathbb{S}\right)$ is sent back as the output of the algorithm.

It is interesting to note that we have introduced two stages of sorting in Algorithm 1, in which, we have exploited the most powerful channels. The first one is while determining the \emph{best destination vector}, and the second one is when we determined the \emph{activation strategy}.
These two stages of sorting in Algorithm 1 provides multi-user diversity gain for our scheme which will result in high throughput. In the next section, we will analyze such multi-user diversity gain, and provide a closed-form result for the throughput of the algorithm.

\section{Throughput Analysis}\label{Sec_Throughput}

In this section, we analyze the resulting throughput of Algorithm 1, and investigate the trade-off between the throughput and the cache memory size. The main result of this section is stated in Theorem 1:
\begin{thm}\label{Th_Analysis_Main}
Suppose that each source cache size is of order $m=n^k$ ($0\leq k \leq 1$), and the channel power distribution satisfies Eq. (\ref{Eq_Model_Pareto_PDF_CDF}). Define the resulting throughput of Algorithm 1 as the number of successful receptions at the destinations (denoted by $T$).
Subsequently, we will have\footnote{We say $f(n)=\Omega(g(n))$ if $|f(n)|>k|g(n)|$, for some positive constant $k$ and large-enough $n$.}
\begin{equation}\label{Eq_Analysis_Theorem}
    \mathbb{E}  \left\{ T \right\} = \Omega \left( n^{ (k+1)/(\alpha+1)-\epsilon} \right),
\end{equation}
where $\epsilon>0$ is an arbitrarily small constant, and $\alpha$ characterizes the channel power distribution as indicated in (\ref{Eq_Model_Pareto_PDF_CDF}).
\end{thm}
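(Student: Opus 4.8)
The plan is to establish the lower bound by exhibiting a single good activation set rather than analyzing the $\arg\max$ in (\ref{Eq_Proposed_S_t}). Since Algorithm 1 selects the candidate $\mathbb{S}_i$ of largest throughput, its performance is at least that of any fixed choice $\mathbb{S}_t$; hence it suffices to pick one value of $t$ and show that activating the top $t$ pairs yields $\Omega(n^{(k+1)/(\alpha+1)-\epsilon})$ successful receptions with probability tending to one. The target is $t = n^{(k+1)/(\alpha+1)-\epsilon}$, and the whole argument reduces to balancing two opposing scaling laws: the boosting of each direct link by multi-user diversity (the first sorting stage of Algorithm 1) against the growth of aggregate interference as more pairs are switched on.

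First I would quantify the diversity gain. Each direct link $\gamma_i^*$ is the maximum of $m=n^k$ i.i.d. variables drawn from (\ref{Eq_Model_Pareto_PDF_CDF}), so its survival function is $P(\gamma_i^* > x) = 1-(1-x^{-\alpha})^m \approx m\,x^{-\alpha}$; the selected link behaves like a Pareto tail inflated by the factor $m$. I would then lower-bound the top order statistics in (\ref{Eq_Proposed_Sorted_Powers}): the number of the $n$ values $\gamma_i^*$ exceeding a threshold $\theta$ is a sum of independent indicators (the $\gamma_i^*$ depend on disjoint rows of $\mathbf{\Gamma}$) with mean $n\,m\,\theta^{-\alpha}$, which concentrates by a Chernoff bound. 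Choosing $\theta$ so that this count is of order $t$ gives, with high probability, $\gamma_{(n-t+1)}^* = \Omega\!\left((nm/t)^{1/\alpha}\right)$, i.e. every one of the top $t$ pairs enjoys a direct link of at least this order.

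Next I would bound the interference, which is the crux of the argument. For the activation set $\mathbb{S}_t$, the interference seen at the destination of an active pair is a sum of $t-1$ i.i.d. Pareto channel powers (the entries of $\mathbf{\Gamma}$ feeding that destination from the other active sources, independent of the direct-link selection up to a negligible collision event). Since $\alpha\ge 2$ the per-term mean $\alpha/(\alpha-1)$ is finite, so the typical interference is $\Theta(t)$; the difficulty is that the Pareto tail precludes sharp concentration, and the interference must be controlled \emph{simultaneously} at all $t$ active destinations. Using the subexponential single-big-jump estimate $P\!\left(\sum_{j=1}^{t} X_j > y\right)\approx t\,y^{-\alpha}$ with a union bound over the $t$ destinations, the worst-case interference $\max_i I_i = O(t^{1+\delta})$ with high probability for an arbitrarily small $\delta>0$ (the union-bounded tail probability scales as $t^{2-\alpha-\alpha\delta}\to 0$). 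The slack $\delta$ this heavy-tailed control forces is exactly what the $-\epsilon$ in (\ref{Eq_Analysis_Theorem}) absorbs.

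Finally I would combine the two estimates. A top-$t$ pair clears the threshold (\ref{Eq_Model_SINR_Constraint}) once $\gamma_{(n-t+1)}^* \ge \beta\bigl(N_0 + \max_i I_i\bigr)$, i.e. once $(nm/t)^{1/\alpha} \gtrsim \beta\,t^{1+\delta}$, which rearranges to $t \lesssim (nm)^{1/(\alpha+1+\alpha\delta)}$, approaching $n^{(k+1)/(\alpha+1)}$ as $\delta\to 0$. Taking $t=n^{(k+1)/(\alpha+1)-\epsilon}$ and $\delta$ small relative to $\epsilon$ leaves enough margin for $N_0$, the constant $\beta$, and the $t^{\delta}$ correction, so that all $t$ activated pairs succeed with probability $1-o(1)$. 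Hence $T \ge t$ on this event, and since $T\ge 0$ always, $\mathbb{E}\{T\} \ge (1-o(1))\,t = \Omega\!\left(n^{(k+1)/(\alpha+1)-\epsilon}\right)$, which is the claim.
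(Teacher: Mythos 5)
Your proposal is correct and reaches the stated bound, but it executes the key steps by a genuinely different route than the paper. The skeleton is the same: both you and the paper fix $t = n^{(k+1)/(\alpha+1)-\epsilon}$, observe that Algorithm 1's maximization over the candidates $\mathbb{S}_i$ dominates the fixed candidate $\mathbb{S}_t$, and analyze activation of the top $t$ pairs. From there the arguments diverge. The paper never attempts simultaneous control of all pairs: it writes $\mathbb{E}\{T\}$ as a sum of marginal success probabilities, lower-bounds it by $t$ times the success probability of the \emph{weakest} activated pair $S_{(r)}-D_{(r)}^*$, factors that probability (using independence of $\gamma_{(r)}^*$ and $I_r$) into a direct-link event and an interference event, and shows each has probability at least $1/2$ --- the interference event by a single Markov inequality (Lemma \ref{Lem_Interference_Ineq}), and the direct-link event by Falk's intermediate order statistics CLT, $(\gamma_{(r)}^*-a_n)/b_n \Rightarrow N(0,1)$ with $a_n \sim n^{(k+1)/(\alpha+1)}n^{\epsilon/\alpha}$ (Lemma \ref{Lem_Direct_Power_Ineq} and the Appendix). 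This yields $\mathbb{E}\{T\}\geq t/4$. You instead prove the stronger statement that \emph{all} $t$ pairs succeed with probability $1-o(1)$: your Chernoff bound on exceedance counts replaces Falk's lemma (more elementary and self-contained, since the indicators depend on disjoint rows of $\mathbf{\Gamma}$), and your union bound over the $t$ destinations requires the heavy-tailed single-big-jump estimate precisely because Markov's inequality is too weak to survive a union bound --- the paper avoids this machinery entirely by only ever looking at one destination. The trade is clear: the paper's proof is shorter and leans on a cited CLT, yielding only a constant-fraction guarantee ($t/4$); yours yields $\mathbb{E}\{T\}\geq(1-o(1))t$ and a per-realization high-probability statement, at the cost of needing the tail estimate $P\{\sum_{j=1}^{t}X_j > y\} \lesssim t\,y^{-\alpha}$ to be a rigorous upper bound (a Fuk--Nagaev-type inequality, stated with care at $\alpha=2$ where the variance is infinite) rather than the loose ``$\approx$'' you wrote. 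Both treatments gloss the same dependence issues (interfering entries conditioned by the best-destination selection, and destination collisions) at a comparable level of rigor; the paper handles these by citation and by its Remark.
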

Since the memory size of each transmitter cache is of order $m=n^k$, Theorem \ref{Th_Analysis_Main} characterizes the
trade-off between the memory size and the network throughput via Eq. (\ref{Eq_Analysis_Theorem}).
\begin{proof}
Define
\begin{equation}\label{Eq_Analysis_Proof_Define_t}
t \triangleq n^{ (k+1)/(\alpha+1)-\epsilon},
\end{equation}
where $\epsilon>0$ is an arbitrarily small constant. In order to prove the theorem, we show that if we activate the first $t$ most powerful source-destination pairs, then, on average, the
order of $\Omega(t)$ of them will be successful. Since Algorithm 1 considers activating most powerful source-destination pairs
of any size, the case where $t$ most powerful pairs are activated will be included in Algorithm 1. This proves that the algorithm achieves the average throughput of order $\Omega(t)$, where $t$ is defined in Eq. (\ref{Eq_Analysis_Proof_Define_t}).

Define
\begin{equation}\label{Eq_Analysis_Proof_Define_r}
r \triangleq n-t+1.
\end{equation}
Then, if $T$ is the number of successful pairs, we will have
\begin{eqnarray}\label{Eq_Analysis_Proof_E_T_Main}
\mathbb{E}\{T\} &=& \sum_{i=r}^{n} { \Pr\left\{\mathrm{SINR}\left(S_{(i)} \rightarrow D_{(i)}^*  \right) \geq \beta\right\} } \\ \nonumber
        &\stackrel{(a)}\geq& t \Pr\left\{\mathrm{SINR}\left(S_{(r)} \rightarrow D_{(r)}^*  \right) \geq \beta\right\} \\ \nonumber
        &\stackrel{(b)}=&t \Pr\left\{ \gamma_{(r)}^* \geq \beta(N_0+I_r) \right\} \\ \nonumber
        &\stackrel{(c)} \geq & t \Pr\left\{ \gamma_{(r)}^* \geq 2\beta\mu t \right\} \Pr \left\{ \beta(N_0+I_r) <2\beta\mu t \right\},
\end{eqnarray}
where (a) is due to the fact that $S_{(r)}-D_{(r)}^*$ is the weakest pair among activated pairs. In (b), $I_r$ represents the interference
imposed on the pair $S_{(r)}-D_{(r)}^*$, and (c) results from independence of $\gamma_{(r)}^*$ and $I_r$ (see \cite{Pooya_WCL} and \cite{Cui}), and $\mu=\mathbb{E}\{\gamma\}$.

In order to proceed, we need the following two lemmas
\begin{lem}\label{Lem_Interference_Ineq}
\begin{equation}\label{Eq_Analysis_Proof_Lemma_Interference_Ineq}
\Pr \left\{ \beta(N_0+I_r) <2\beta\mu t \right\} \geq \frac{1}{2}.
\end{equation}
\end{lem}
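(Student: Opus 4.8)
The plan is to reduce the claim to a first-moment (Markov) bound on the aggregate interference. Since $\beta>0$ we may cancel it, so the event in question is $\{N_0+I_r<2\mu t\}$ and its complement is $\{N_0+I_r\ge 2\mu t\}$; it then suffices to show the complement has probability at most $1/2$. Here $I_r$ is the interference seen at the destination $D_{(r)}^*$ from the remaining activated sources, namely
\[
I_r=\sum_{S_k\in\mathbb{S}_t,\,k\neq (r)}\gamma_{k,(r)^*},
\]
where $\gamma_{k,(r)^*}$ is the channel power from the interfering source $S_k$ to $D_{(r)}^*$. This is a sum of exactly $t-1$ nonnegative channel powers.

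First I would pin down $\mathbb{E}\{I_r\}$. Each summand $\gamma_{k,(r)^*}$ is a single entry of $\mathbf{\Gamma}$, and by the i.i.d. assumption in (\ref{Eq_Model_Pareto_PDF_CDF}) its marginal mean is $\mu=\mathbb{E}\{\gamma\}=\alpha/(\alpha-1)$. The point to verify is that conditioning on the output of Algorithm 1 does not inflate this mean: the ordering activates sources according to their direct links $\gamma_i^*$ (a function of the rows of $\mathbf{\Gamma}$), while the target column $(r)^*$ is fixed by the best-destination choice of $S_{(r)}$ alone, so the interfering entries $\gamma_{k,(r)^*}$ with $k\neq(r)$ lie in a column whose selection does not use them. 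This is exactly the independence structure already invoked in step (c) and established in \cite{Pooya_WCL} and \cite{Cui}; granting it, $\mathbb{E}\{I_r\}=(t-1)\mu$.

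With the mean in hand, the result follows from Markov's inequality applied to the nonnegative variable $N_0+I_r$:
\[
\Pr\{N_0+I_r\ge 2\mu t\}\le\frac{N_0+\mathbb{E}\{I_r\}}{2\mu t}=\frac{N_0+(t-1)\mu}{2\mu t}.
\]
Writing the numerator as $t\mu+(N_0-\mu)$ and using that for the Pareto law $\mu=\alpha/(\alpha-1)>1$ while the normalized noise satisfies $N_0\le 1\le\mu$ in the interference-limited regime, we get $N_0+(t-1)\mu\le t\mu$, so the right-hand side is at most $1/2$. Taking complements yields (\ref{Eq_Analysis_Proof_Lemma_Interference_Ineq}).

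The main obstacle is the mean computation in the middle step, not the Markov bound itself: one must be confident that the two sorting stages of Algorithm 1 leave the \emph{marginal} law of the interference entries unchanged. The saving grace is the clean row/column separation — the direct-link ordering touches rows, whereas the interferers live in a single fixed column — combined with the i.i.d. hypothesis. I also expect the slack of one interferer (the $t-1$ terms rather than $t$) to be exactly what absorbs the noise $N_0$ and delivers the sharp constant $1/2$ instead of merely $1/2+o(1)$.
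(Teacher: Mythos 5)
Your proof is correct and follows essentially the same route as the paper's: Markov's inequality applied to the nonnegative variable $N_0+I_r$, combined with $\mathbb{E}\{I_r\}=(t-1)\mu$, granting the same independence structure that the paper defers to \cite{Pooya_WCL} and \cite{Cui}. The only difference is cosmetic and lies in the last step: the paper lets the bound $1-\frac{N_0+(t-1)\mu}{2\mu t}$ tend to $\frac{1}{2}$ as $t\to\infty$ (which is all the scaling theorem needs), whereas you extract the exact constant by additionally assuming $N_0\le\mu$ --- an assumption the paper never states, but one that is harmless in this asymptotic setting.
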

\begin{proof}
\begin{eqnarray}\label{Eq_Analysis_Lemma__Interference_Ineq_Proof}
\\ \nonumber
\Pr \left\{ \beta(N_0+I_r) <2\beta\mu t \right\} &=& 1- \Pr \left\{ \beta(N_0+I_r) \geq 2\beta\mu t \right\} \\ \nonumber
&\stackrel{(a)}\geq& 1-\frac{ \mathbb{E}\{\beta(N_0+I_r)\} } {2\beta\mu t} \\ \nonumber
&\stackrel{(b)}=& 1-\frac{ \beta(N_0+\mu (t-1)) } {2\beta\mu t} \\ \nonumber
&\sim& \frac{1}{2} \hspace{5 mm} \mathrm{as} \hspace{5 mm} t\rightarrow \infty,
\end{eqnarray}
where (a) is due to Markov's inequality, and (b) is due to the fact that $I_r$ is the sum of $t-1$ random variables with the mean $\mu$.
\end{proof}
\begin{lem}\label{Lem_Direct_Power_Ineq}
\begin{equation}\label{Eq_Analysis_Proof_Lemma_Direct_Power_Ineq}
\Pr\left\{ \gamma_{(r)}^* \geq 2\beta\mu t \right\} \geq \frac{1}{2}.
\end{equation}
\end{lem}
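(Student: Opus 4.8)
The plan is to recast the order-statistic event as a binomial tail and then show that this tail tends to one. Write $c \triangleq 2\beta\mu t$. Since $\gamma_{(r)}^*$ is the $r$-th smallest of the $n$ i.i.d. values $\gamma_1^*,\dots,\gamma_n^*$ and $r=n-t+1$, the event $\{\gamma_{(r)}^* \geq c\}$ is precisely the event that at least $t$ of the $\gamma_i^*$ exceed $c$. Introducing the count $N \triangleq |\{i : \gamma_i^* \geq c\}|$, I would therefore write $\Pr\{\gamma_{(r)}^* \geq c\} = \Pr\{N \geq t\}$, where $N$ is a Binomial$(n,p)$ random variable with success probability $p \triangleq \Pr\{\gamma_i^* \geq c\}$. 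Because each $\gamma_i^*$ is the maximum of $m$ i.i.d.\ draws from (\ref{Eq_Model_Pareto_PDF_CDF}), its c.d.f.\ is $F(x)^m$, so $p = 1 - (1 - c^{-\alpha})^m$.

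The crux is to show that the binomial mean $\mathbb{E}\{N\} = np$ dominates $t$. First I would lower-bound $p$: the inequality $(1-x)^m \le e^{-mx}$ gives $p \ge 1 - e^{-m c^{-\alpha}}$, and since the exponent bookkeeping below confirms $m c^{-\alpha}\to 0$, the elementary estimate $1-e^{-y}\ge y/2$ for $y\in[0,1]$ yields $p \ge \tfrac12\, m c^{-\alpha}$ for large $n$. Substituting $m=n^k$, $c=2\beta\mu t$, and $t=n^{(k+1)/(\alpha+1)-\epsilon}$, a direct computation of the $n$-exponents gives
\[
\frac{np}{t} \;\ge\; \frac{1}{2(2\beta\mu)^\alpha}\,\frac{n^{k+1}}{t^{\alpha+1}} \;=\; \frac{1}{2(2\beta\mu)^\alpha}\, n^{(\alpha+1)\epsilon}.
\]
Since $(\alpha+1)\epsilon>0$, this ratio diverges, so $np \gg t$; the companion exponent $k - \alpha\big((k+1)/(\alpha+1)-\epsilon\big) = (k-\alpha)/(\alpha+1) + \alpha\epsilon$ is negative for small $\epsilon$ (using $k\le 1$ and $\alpha\ge 2$), which confirms $m c^{-\alpha}\to 0$ and closes the loop on the approximation for $p$.

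Finally I would invoke concentration of the binomial around its mean. For $n$ large enough that $np \ge 2t$ (guaranteed by the previous step), I have $np - t \ge np/2$, and Chebyshev's inequality gives
\[
\Pr\{N < t\} \;\le\; \Pr\{\,|N-np|\ge np-t\,\} \;\le\; \frac{np(1-p)}{(np-t)^2} \;\le\; \frac{4}{np} \;\longrightarrow\; 0.
\]
Hence $\Pr\{\gamma_{(r)}^* \ge 2\beta\mu t\} = \Pr\{N\ge t\}\to 1$, which in particular exceeds $1/2$ for all large $n$, as claimed.

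I expect the only delicate step to be the exponent accounting that forces $np/t\to\infty$. The specific choice $t=n^{(k+1)/(\alpha+1)-\epsilon}$ is engineered precisely so that the $\epsilon$-slack converts what would otherwise be a borderline $np\asymp t$ into a polynomially growing ratio $n^{(\alpha+1)\epsilon}$; without this slack the binomial mean would only match $t$ up to constants and Chebyshev would no longer pin the probability above $1/2$. Everything else --- the Pareto-maximum c.d.f., the linearization of $p$, and the concentration step --- is routine.
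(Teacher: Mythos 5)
Your proof is correct, but it takes a genuinely different route from the paper's. The paper's proof rests on Falk's central limit theorem for intermediate order statistics (Lemma~\ref{Lemma_Falk} in the Appendix): it computes the centering sequence $a_n = F^{-1}_{\gamma^*}\left(1-t/n\right) \sim n^{(k+1)/(\alpha+1)}n^{\epsilon/\alpha}$, notes that $2\beta\mu t \leq a_n$ for large $n$ (because $n^{\epsilon/\alpha} > 2\beta\mu n^{-\epsilon}$ eventually), and then reads off $\Pr\left\{ \gamma_{(r)}^* \geq a_n \right\} \to 1/2$ from the Gaussian limit. You instead translate the order-statistic event into a binomial tail, $\Pr\left\{ \gamma_{(r)}^* \geq c \right\} = \Pr\left\{ N \geq t \right\}$ with $N$ binomial, lower-bound the success probability $p$ by linearizing the Pareto-maximum c.d.f., and close with Chebyshev. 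The exponent arithmetic is the same computation in both arguments --- your statement that $np/t$ grows like $n^{(\alpha+1)\epsilon}$ is exactly the paper's observation that $2\beta\mu t$ sits polynomially below the $(1-t/n)$-quantile $a_n$ of $\gamma^*$ --- but your version is more elementary and self-contained (no appeal to Falk's lemma, hence no need for the Appendix), and it proves the strictly stronger conclusion $\Pr\left\{ \gamma_{(r)}^* \geq 2\beta\mu t \right\} \to 1$, whereas the paper's proof, which evaluates the probability at the asymptotic median $a_n$, can only deliver the constant $1/2$. What the paper's route buys in return is distributional precision: Falk's lemma locates $\gamma_{(r)}^*$ at $a_n$ with Gaussian fluctuations of scale $b_n$, information that a one-sided concentration bound does not provide. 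Note that both proofs establish the inequality only for sufficiently large $n$, which is all that the scaling claim of Theorem~\ref{Th_Analysis_Main} requires.
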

\begin{proof}
First, we point out the fact that $\gamma_{(r)}^*$, when properly normalized, is a Normal random variable in the following sense:
\begin{equation}\label{Eq_Analysis_Lemma_Direct_Power_Ineq_Falk_1}
    \frac{\gamma_{(r)}^*-a_n}{b_n} \Rightarrow N(0,1),
\end{equation}
where $N(0,1)$ represents the standard Normal random variable, $\Rightarrow$ indicates convergence in distribution, $b_n>0$, and
\begin{equation}\label{Eq_Analysis_Lemma_Direct_Power_Ineq_Falk_2}
    a_n \sim n^{(k+1)/(\alpha+1)} n^{\epsilon/\alpha}.
\end{equation}
The above assertion in (\ref{Eq_Analysis_Lemma_Direct_Power_Ineq_Falk_1}) and (\ref{Eq_Analysis_Lemma_Direct_Power_Ineq_Falk_2}) is rigorously proved in the Appendix. Then, we have
\begin{eqnarray}\label{Eq_Analysis_Lemma_Direct_Power_Ineq_Main}
\\ \nonumber
    \Pr\left\{ \gamma_{(r)}^* \geq 2\beta\mu t \right\} &\stackrel{(a)}=& \Pr\left\{ \gamma_{(r)}^* \geq 2\beta\mu n^{(k+1)/(\alpha+1)-\epsilon} \right\} \\ \nonumber
    &\stackrel{(b)} \geq & \Pr\left\{ \gamma_{(r)}^* \geq  n^{(k+1)/(\alpha+1)}n^{\epsilon/\alpha} \right\} \\ \nonumber
    &\stackrel{(c)}=& \Pr\left\{ \gamma_{(r)}^* \geq  a_n \right\} \\ \nonumber
    &\stackrel{(d)}=&\frac{1}{2},
\end{eqnarray}
where (a) results from (\ref{Eq_Analysis_Proof_Define_t}), (b) is true since we have $n^{\epsilon/\alpha} >  2\beta\mu n^{-\epsilon}$ for large-enough $n$, (c) comes from (\ref{Eq_Analysis_Lemma_Direct_Power_Ineq_Falk_2}),
 and (d) results from (\ref{Eq_Analysis_Lemma_Direct_Power_Ineq_Falk_1}).
\end{proof}
By inserting (\ref{Eq_Analysis_Proof_Lemma_Interference_Ineq}) and (\ref{Eq_Analysis_Proof_Lemma_Direct_Power_Ineq}) into (\ref{Eq_Analysis_Proof_E_T_Main}) we arrive at
\begin{eqnarray}\label{Eq_Analysis_Theorem_Proof_Final}
\mathbb{E}\{T\} \geq \frac{t}{4},
\end{eqnarray}
which states that the average number of successful pairs (i.e. throughput) is lower bounded by $t/4$, concluding the proof.
\end{proof}
\begin{myremark}
It should be noted that, in the proof of Theorem \ref{Th_Analysis_Main} we have implicitly assumed that, at the stage of determining
the \emph{best destination vector}, no destination is chosen by more than one source. One can easily show that, such event is improbable for large networks, and does not
affect our discussions in the proof. For details refer to \cite{Cui_Opportunistic}.
\end{myremark}

\section{Simulation Results}\label{Sec_Simulations}

In order to obtain a deeper understanding of the performance of Algorithm 1, in this section, we provide simulations which
also verify our analytical scaling results presented earlier in Theorem \ref{Th_Analysis_Main}.

Fig. \ref{Fig3} presents the average throughput for the network operating under Algorithm 1, as a function of the number of nodes $n$. The $x$-axis ranges from 30 to 200 nodes, and the results are presented on a log-log scale.
In this figure,  $k=1/2$ indicates that the memory size of each transmitter is of order $m=\sqrt{n}$. The three lines in this figure correspond to different values for $\alpha$, namely, $\alpha=2,3,4$, all with the same memory size. The results are
averaged on $500$ independent network realizations, each operating based on Algorithm 1.

Fig. \ref{Fig3} exhibits some important facts. First, all the three curves (i.e. $\mathbb{E}\{T\}$ vs. $n$) are linear on the log-log scale. This is in agreement
with our result presented in Theorem \ref{Th_Analysis_Main}. The slope of each line should correspond to the scaling exponent of the throughput (as we will clarify later).
Second, the smaller values of $\alpha$ result in higher throughput (for the fixed $k=1/2$). That is due to the fact that by decreasing $\alpha$ we will have a heavier tail for the channel power distribution (see Eq. (\ref{Eq_Model_Pareto_PDF_CDF})).
This will result in more multi-user diversity gain in Algorithm 1, which will result in higher throughput. Such throughput increase is also reflected in the slope of each line in Fig. \ref{Fig3}, which is in agreement with Theorem \ref{Th_Analysis_Main}.

In Fig. \ref{Fig4}, throughput for the same network setting (as in Fig. \ref{Fig3}) is presented for a fixed value of $\alpha=2$, and different values of memory size, namely $m=1,\sqrt{n}$, and $n$, corresponding to $k=0,1/2$, and $1$, respectively. The same linear behaviour of average throughput versus $n$ on the log-log scale is observed in Fig. \ref{Fig4}. Also, this figure shows that for networks with larger transmitter memory size (i.e. larger $k$) we will have
higher throughput. Such throughput increase is also reflected in the slope of each line for different values of $k$, as expected from Theorem \ref{Th_Analysis_Main}. This phenomenon reflects the trade-off between cache size and throughput.

In Fig. \ref{Fig5}, we verify that the slopes of the lines discussed in figures \ref{Fig3} and \ref{Fig4} (i.e. throughput scaling exponents) agree with what we have derived
in Theorem \ref{Th_Analysis_Main}. Define the network throughput scaling exponent as follows
\begin{equation}
e_T \triangleq \frac{\log(\mathbb{E}\{T\})}{\log(n)}.
\end{equation}
In Fig. \ref{Fig5}, $x$-axis represents different values of $k$, $y$-axis represents the network throughput scaling exponent $e_T$, and different lines
correspond to different values of $\alpha$. Dashed lines show what scaling exponent Theorem \ref{Th_Analysis_Main} suggests. Thus, the upper dashed line for $\alpha=2$ is the line $e_T=(k+1)/3$, the middle dashed line for $\alpha=3$ is $e_T=(k+1)/4$, and the lower
dashed line for $\alpha=4$ is $e_T=(k+1)/5$. Solid lines reflect what simulations suggest for the scaling exponent (the slopes of the throughput vs. $n$ plot on log-log scale). As shown in Fig. \ref{Fig5}, there is a good agreement
between the scaling exponents suggested by Theorem \ref{Th_Analysis_Main}, and the simulation results. The increasing nature of $e_T$ as the function of $k$ in Fig. \ref{Fig5}, is again reflecting the memory-throughput trade-off.

\begin{figure}
\begin{center}
\includegraphics[width=0.45\textwidth]{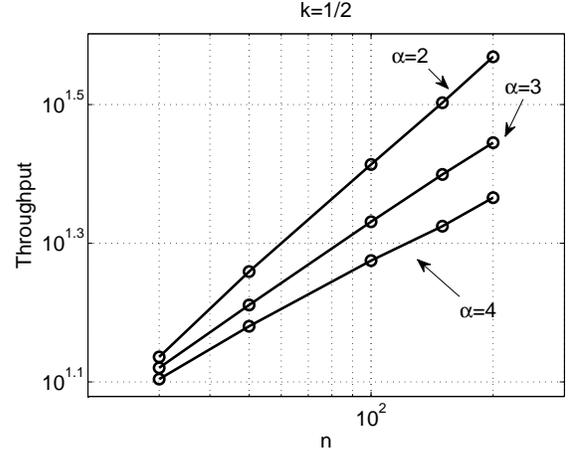}
\end{center}
\caption{Average Throughput vs. $n$ for $k=1/2$ and different values of $\alpha$ on the log-log scale.\label{Fig3}}
\end{figure}

\begin{figure}
\begin{center}
\includegraphics[width=0.45\textwidth]{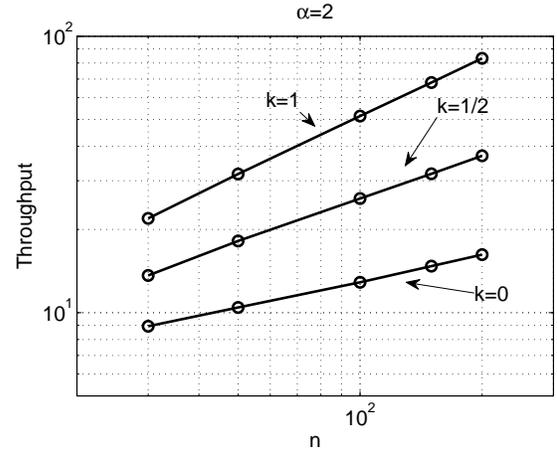}
\end{center}
\caption{Average Throughput vs. $n$ for $\alpha=2$ and different values of $k$ on the log-log scale.\label{Fig4}. The line for $k=0$ refers to the case where there is no caching gain available.}
\end{figure}

\begin{figure}
\begin{center}
\includegraphics[width=0.45\textwidth]{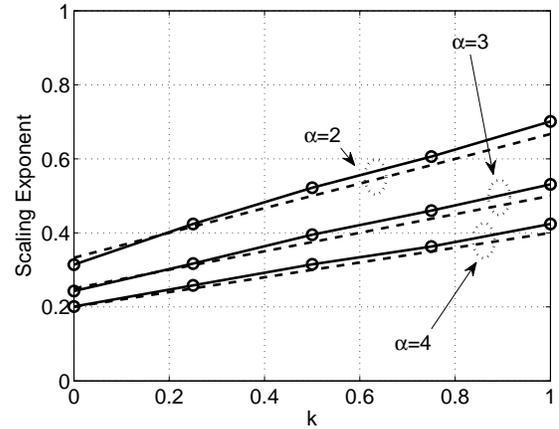}
\end{center}
\caption{Scaling exponent vs. $k$ for different values of $\alpha$. Dashed lines are suggested by Eq. (\ref{Eq_Analysis_Theorem}) in Theorem \ref{Th_Analysis_Main}, and solid lines with circles are resulted from simulations.\label{Fig5}}
\end{figure}

\section{Summary and Future Works}\label{Summary}

In this paper, we have considered the effect of data caching ability, on the throughput of wireless networks. The network we study is under the random connections model,
where fading is the dominant channel effect. Our proposed scheme exploits caching gain in the context of multiuser diversity, by activating wireless channels in better conditions. Based on our results, there
exists a trade-off between cache size and network throughput; the larger memory space available at each transmitter, the higher the network throughput will be. In order to characterize such trade-off, we have proposed a scaling result for network throughput as a function of cache size,
and have verified it by simulations. In this paper, we have considered a heavy-tailed distribution for channel power fading, however, the current approach can be extended to
other distributions in a straightforward manner. In summary, our results show that in cache-enabled wireless networks, we can get substantial network throughput enhancements.

Also, there remain some important issues and future directions for research, one of which is as follows. Since the proposed scheme relies on multi-user diversity gain, each destination will face delay until being served by some source.
Since the network setup is completely symmetrical in the view of all destinations, such delay is fairly imposed to all destinations.
Thus, our scheme is fair over a long time span. However, in networks where such symmetry is violated (e.g. if the channel power gains are not i.i.d. anymore), we should devise
protocols to guarantee fairness. This issue is an interesting future work direction.

\section*{Appendix}\label{Sec_Appendix}
From (\ref{Eq_Proposed_Sorted_Powers}) and (\ref{Eq_Analysis_Proof_Define_r}) it can be verified that $\gamma_{(r)}^*$ is the $t$'th largest value among $\gamma_{i}^*$'s. Thus, we can use the following intermediate order statistics lemma to analyze the distribution of $\gamma_{(r)}^*$
\begin{lem} [Falk, 1989, \cite{Pooya_2013}] \label{Lemma_Falk}
Assume that $X_1,X_2,\dots,X_n$ are i.i.d. random variables with the c.d.f. $F(x)$. Define $X_{(1)},X_{(2)}, \dots, X_{(n)}$ to be the order statistics of $X_1,X_2,\dots,X_n$. If $i \rightarrow \infty$ and $i/n \rightarrow 0$ as $n \rightarrow \infty$, then there exist sequences $a_n$ and $b_n>0$ such that
\begin{equation}\label{Eq_Appendix_Lemma_Falk_1}
	\frac{X_{(n-i+1)}-a_n}{b_n} \Rightarrow N(0,1),
\end{equation}
where $\Rightarrow$ denotes convergence in distribution, and $N(0,1)$ is the Normal distribution with zero mean and unit variance. Furthermore, one choice for $a_n$ and $b_n$ is:
\begin{eqnarray}\label{Eq_Appendix_Lemma_Falk_2}
	a_n=F^{-1}\left(1-\frac{i}{n}\right), \;\;\;	b_n=\frac{\sqrt{i}}{nf(a_n)}.
\end{eqnarray}
\end{lem}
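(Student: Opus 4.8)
The plan is to reduce the stated convergence to a central limit theorem for a binomial tail probability. The starting point is the exact distributional identity for an intermediate order statistic: writing $S_n(x)=\#\{j\le n: X_j\le x\}$, which is $\mathrm{Binomial}\bigl(n,F(x)\bigr)$, one has $\Pr\{X_{(n-i+1)}\le x\}=\Pr\{S_n(x)\ge n-i+1\}$, since the $i$-th largest observation fails to exceed $x$ exactly when at least $n-i+1$ of the $X_j$ fall at or below $x$. This converts a question about order statistics into one about binomial tails, where classical limit theorems apply.

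First I would fix a real $z$, set $x_n=a_n+b_nz$ with $a_n=F^{-1}(1-i/n)$ and $b_n=\sqrt{i}/(n f(a_n))$, and linearise $F$ at $a_n$. (The form of the constants is anticipated by the quantile-transform viewpoint: $X_{(n-i+1)}\stackrel{d}{=}F^{-1}\bigl(U_{(n-i+1)}\bigr)$, the uniform intermediate order statistic $U_{(n-i+1)}$ has mean $\approx 1-i/n$ and standard deviation $\approx\sqrt{i}/n$, and the delta method multiplies the latter by $(F^{-1})'(1-i/n)=1/f(a_n)$.) Since $F(a_n)=1-i/n$ and $b_n f(a_n)=\sqrt{i}/n$, a first-order Taylor expansion gives $p_n:=F(x_n)=1-i/n+z\sqrt{i}/n+o(\sqrt{i}/n)$. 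Consequently $np_n=n-i+z\sqrt{i}+o(\sqrt{i})$, and because $i/n\to 0$, $np_n(1-p_n)=i\,(1+o(1))$.

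Next I would invoke the De Moivre--Laplace (triangular-array) CLT for $S_n(x_n)$, legitimate precisely because $i\to\infty$: the effective sample size $np_n(1-p_n)=i(1+o(1))\to\infty$, so Lyapunov's condition for the centred Bernoulli summands holds and a Berry--Esseen bound gives an error of order $i^{-1/2}\to 0$. Standardising,
\begin{equation}\nonumber
\Pr\{S_n(x_n)\ge n-i+1\}=\Pr\left\{\frac{S_n(x_n)-np_n}{\sqrt{np_n(1-p_n)}}\ge\frac{(n-i+1)-np_n}{\sqrt{np_n(1-p_n)}}\right\},
\end{equation}
and the deterministic threshold on the right equals $\bigl(-z\sqrt{i}+o(\sqrt{i})\bigr)/\sqrt{i}\to -z$. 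Hence the probability converges to $\Pr\{N(0,1)\ge -z\}=\Phi(z)$, which is exactly the assertion $(X_{(n-i+1)}-a_n)/b_n\Rightarrow N(0,1)$ with the claimed $a_n$ and $b_n$.

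The hard part is the uniform control of the two approximations that make the above rigorous, and both places are where the twin hypotheses $i\to\infty$ and $i/n\to 0$ enter. On the analytic side, the linearisation of $F$ requires a von Mises--type smoothness of $f$ near the intermediate quantile $a_n$ (positivity and a controlled derivative), so that the second-order term $\tfrac12 b_n^2 z^2 f'(a_n)$ is $o(\sqrt{i}/n)$; equivalently $b_n|f'(a_n)|/f(a_n)\to 0$, and the remainder then does not perturb the limiting threshold. On the probabilistic side, one must upgrade De Moivre--Laplace to the regime $p_n\to 1$ with $n(1-p_n)=i\to\infty$, which the Berry--Esseen rate $O(i^{-1/2})$ supplies. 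For the heavy-tailed laws relevant here (and the powers of the Pareto c.d.f.\ in (\ref{Eq_Model_Pareto_PDF_CDF}) to which the lemma is ultimately applied) these hypotheses are automatic: $a_n$ and $f(a_n)$ are explicit, and a direct computation shows $b_n|f'(a_n)|/f(a_n)\sim i^{-1/2}\to 0$, so every remainder term vanishes at a polynomial rate and the stated convergence, together with the explicit constants, follows.
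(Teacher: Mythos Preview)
Your argument is the standard route to asymptotic normality of intermediate order statistics---the binomial representation $\Pr\{X_{(n-i+1)}\le x\}=\Pr\{S_n(x)\ge n-i+1\}$ followed by a triangular-array CLT with the success probability linearised around the quantile---and it is correct under the von Mises--type regularity you flag. There is, however, nothing in the paper to compare it against: the paper does not prove Lemma~\ref{Lemma_Falk} at all. It is quoted as a known result attributed to Falk (1989), with a pointer to \cite{Pooya_2013}, and is then \emph{applied} to the random variables $\gamma_i^*$ (whose c.d.f.\ is $F_{\gamma^*}(x)=(1-x^{-\alpha})^m$) to compute the particular centring sequence $a_n\sim n^{(k+1)/(\alpha+1)}n^{\epsilon/\alpha}$ used in Lemma~\ref{Lem_Direct_Power_Ineq}. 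So your write-up supplies a self-contained justification that the paper simply outsources to the literature; if your intent was to reproduce the paper's own proof, be aware that there is none to reproduce.
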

In order to apply Lemma \ref{Lemma_Falk} to the random variable $\gamma_{(r)}^*$, we have to derive $a_n$ for our problem. The first move is to compute
the c.d.f. of the random variable $\gamma_i^*$. Since in the first step of Algorithm 1 (i.e. determining the \emph{best destination vector}), we have chosen the best destination for each source, $\gamma_i^*$ is the maximum of $m$ i.i.d. random variables,
with the distribution given by Eq. (\ref{Eq_Model_Pareto_PDF_CDF}). Therefore, for the c.d.f. of $\gamma_i^*$ we have:
\begin{equation}\label{Eq_Appendix_Max_Distribution}
    F_{\gamma^*}(x)=\left(1-\frac{1}{x^\alpha}\right)^m.
\end{equation}
Thus, in Lemma \ref{Lemma_Falk}, by  setting $i=t$ and applying (\ref{Eq_Appendix_Max_Distribution}) to (\ref{Eq_Appendix_Lemma_Falk_2}), we will have
\begin{eqnarray}\label{Eq_Appendix_our_a_n}
    a_n &=& F^{-1}_{\gamma^*}\left( 1-\frac{t}{n} \right) \\ \nonumber
    &=&  \frac{1}{   \left(1-(1-t/n)^{1/m} \right)^{1/\alpha}    } \\ \nonumber
    &\stackrel{(a)}\sim& \frac{1}{   \left(1-(1-\frac{t}{mn}) \right)^{1/\alpha}    } \\ \nonumber
    &\stackrel{(b)}=& n^{(k+1)/(\alpha+1)} n^{\epsilon/\alpha},
\end{eqnarray}
where (a) follows from the approximation $(1+x)^{\alpha} \sim 1+\alpha x$ for $x \ll 1$, and (b) follows from (\ref{Eq_Analysis_Proof_Define_t}) and our assumption that $m=n^k$. This concludes the proof.

\bibliographystyle{ieeetr}

\end{document}